\journalname{Quantum Information Processing}
\begin{document}

\title{
Tractable measure of nonclassical correlation
using density matrix truncations
}
\titlerunning{Tractable measure of nonclassical correlation}
\author{Akira SaiToh \and Robabeh Rahimi \and Mikio Nakahara}
\institute{A. SaiToh \at
Research Center for Quantum Computing,
Interdisciplinary Graduate School of Science and Engineering,
Kinki University, 3-4-1 Kowakae, Higashi-Osaka, Osaka 577-8502, Japan\\
Tel.: +81-6-6721-2332\\
Fax: +81-6-6727-4301\\
\email{saitoh@alice.math.kindai.ac.jp}
\and
R. Rahimi \at
Departments of Chemistry and Materials
Science, Graduate School of Science, Osaka City University,
3-3-138 Sugimoto, Sumiyoshi-ku, Osaka 558-8585, Japan\\
{\it Note:} She belonged to Department of Physics, Kinki
University when this work was mostly conducted.\\
\email{rahimi@sci.osaka-cu.ac.jp}
\and
M. Nakahara \at
Research Center for Quantum Computing,
Interdisciplinary Graduate School of Science and Engineering,
Kinki University, 3-4-1 Kowakae, Higashi-Osaka, Osaka 577-8502, Japan;\\
Department of Physics, Kinki University, 3-4-1 Kowakae,
Higashi-Osaka, Osaka 577-8502, Japan\\
\email{nakahara@math.kindai.ac.jp}
}

\date{Received: date / Accepted: date}

\maketitle

\begin{abstract}
In the context of the Oppenheim-Horodecki paradigm of nonclassical
correlation, a bipartite quantum state is (properly) classically
correlated if and only if it is represented by a density matrix having a
product eigenbasis. On the basis of this paradigm, we propose a measure
of nonclassical correlation by using truncations of a density matrix
down to individual eigenspaces. It is computable within polynomial time
in the dimension of the Hilbert space albeit imperfect in the detection
range. This is in contrast to the measures conventionally used for the
paradigm. The computational complexity and mathematical properties of
the proposed measure are investigated in detail and the physical picture
of its definition is discussed.
\end{abstract}

\keywords{Nonclassical correlation \and Quantumness
\and Computational tractability \and Informational entropy
\and  Matrix truncation}

\PACS{03.65.Ud \and 03.67.Mn}

\section{Introduction}\label{sec1}
Classical/nonclassical separation of correlations between
subsystems of a bipartite quantum system has been an essential
and insightful subject in quantum information theory.
The entanglement paradigm \cite{W89,P96} is based on the state preparation
stage: any quantum state that cannot be prepared by local operations and
classical communications (LOCC) \cite{PV06} is entangled.
There are paradigms \cite{B99,Z02,O02,GPW05} based on post-preparation
stages, which use different definitions of classical and nonclassical
correlations.
On the basis of the Oppenheim-Horodecki definition \cite{O02,H05,PS08},
a quantum bipartite system consisting of subsystems A and B
is (properly) classically correlated if and only if it is described
by a density matrix having a product eigenbasis (PE),\\
$
 \rho^{\rm AB}_{\rm PE}=\sum_{j,k=1,1}^{d^{\rm A},d^{\rm B}}
e_{jk}|v_j^{\rm A}\rangle\langle v_j^{\rm A}|
\otimes |v_k^{\rm B}\rangle\langle v_k^{\rm B}|,
$
where $d^{\rm A}$ ($d^{\rm B}$) is the dimension of the Hilbert space of
 A (B), $e_{jk}$ is the eigenvalue of $\rho^{\rm AB}_{\rm PE}$
corresponding to an eigenvector
$|v_j^{\rm A}\rangle\otimes |v_k^{\rm B}\rangle$.
Thus, a quantum bipartite system consisting of subsystems A and B
is nonclassically correlated if and only if it is described by a density
matrix having no product eigenbasis. In this Paper, we employ
this classical/nonclassical separation.

This definition was introduced in the discussions by Oppenheim {\em et
al.} \cite{O02,H05} on information that can be localized by applying
closed LOCC (CLOCC, a branch family of LOCC) operations. The
CLOCC protocol allows only local unitary operations and the operations of
sending subsystems through a complete dephasing channel.
The classical/nonclassical separation is linked to a localizable
information under the zero-way CLOCC protocol in which
coherent terms are deleted completely by local players before
communicating under CLOCC. A bipartite state with a product
eigenbasis carries information completely localizable under zero-way
CLOCC.
The nonlocalizable information under zero-way CLOCC is a measure of
nonclassical correlation. To bring physical insights to the paradigm,
the operational interpretations of nonclassical correlation have been
investigated in view of impossibility of local broadcasting \cite{PHH08}
and in view of nonclassicality in local measurements \cite{H05,L08}.
Furthermore, a correspondence between a separable state and a
classically correlated state in the context of a system extension by
using local ancillary systems was reported \cite{LL08}. The paradigm is
thus physically meaningful, for which it is of growing interest
to introduce measures \cite{Z02,O02,H05,G07,SRN08,SRN08-2,PHH08} and
detection methods \cite{F10,RS10,BC10} (a measure can, of course, be
regarded as a detection tool). 

To define measures of nonclassical correlation, there must be
certain axioms or requirements to satisfy. To find proper axioms for them,
let us revisit some axioms \cite{PV06} for entanglement measures.
(Although we do not revisit, one may also be interested in axioms for
classical correlation measures \cite{HV01}.)
The commonly accepted axioms for entanglement measures are 
(i) a measure should vanish for any separable state, and (ii)
entanglement and its measure should not increase under LOCC. These are
regarded as the most important axioms. In addition, one must
have certain maximally entangled states for which a measure takes
its maximum value as a consequence of (ii).
Similar axioms are naturally to be imposed to measures of nonclassical
correlation. The following axioms are suggested in the present context where
the basic protocol is the zero-way CLOCC.
(i') a measure should vanish for any state having a product eigenbasis,
and (ii') a measure should be invariant under local unitary
transformations. It should be noted that one cannot require a measure to
be nonincreasing under general local operations.\footnote{
There are local operations that increase nonclassical correlation.
For example, consider a 3-qubit state of the system ${\rm AB}$,
$(|01\rangle^{\rm A}\langle 01|\otimes |0\rangle^{\rm B}\langle 0|
+ |+0\rangle^{\rm A}\langle +0|\otimes |1\rangle^{\rm B}\langle 1|)/2$,
with $|+\rangle =(|0\rangle+|1\rangle)/\sqrt{2}$. This density matrix
has a product eigenbasis and hence the state possesses only classical
correlation. Let us apply a complete projection with the projectors
$|+\rangle\langle +|$ and $|-\rangle\langle -|$ to the second qubit.
The resultant state is represented as a density matrix that has the
eigenvectors $|0+\rangle^{\rm A}|0\rangle^{\rm B}$,
$|0-\rangle^{\rm A}|0\rangle^{\rm B}$,
$|++\rangle^{\rm A}|1\rangle^{\rm B}$, and
$|+-\rangle^{\rm A}|1\rangle^{\rm B}$ with the equal populations.
This has no product eigenbasis and hence the state possesses
nonclassical correlation.}
One may also need a measure to take the maximum value
for certain maximally entangled states since they should possess quite large
nonclassical correlation.

The above axioms (i') and (ii') are satisfied by the nonlocalizable information
named zero-way quantum deficit \cite{O02,H05} and another well-known
measure of nonclassical correlation, the minimized quantum discord \cite{Z02}.
They take the maximum values (dependent on the system size) for the generalized
Bell state $|\Psi_{\rm Bell}\rangle \langle\Psi_{\rm Bell}|$ with
$|\Psi_{\rm Bell}\rangle = \sum_{i=1}^{N}|i\rangle|i\rangle/\sqrt{N}$ for an
$N\times N$-dimensional bipartite system.
The quantum discord is a discrepancy between two different expressions
of mutual informations that are identical in the classical probability
theory. Its minimization over the choices of a local basis is local-unitary
invariant and can be used as a measure of nonclassical correlation in the
present context. Recently, the minimized one is simply referred to as
quantum discord. We follow this custom hereafter. Quantum discord for one
side, say, the subsystem ${\rm A}$, as an ``apparatus'' vanishes if and only
if \cite{Z02} the bipartite density matrix is written as
$\sum_k |k\rangle^{\rm A} \langle k|\otimes \xi_k^{\rm B}$
with $\xi_k^{\rm B}$ positive Hermitian operators acting on the other
side ${\rm B}$. Thus the average of the quantum discords for both sides
is a measure with the full detection range of nonclassical correlation.

There are other measures \cite{G07,SRN08,SRN08-2,PHH08} that were later
proposed on the basis of the same definition of classical/nonclassical
correlations. In particular, Piani {\em et al.} \cite{PHH08} designed a
measure which vanishes if and only if a state has a product eigenbasis.
It is in a similar form as quantum discord and defined as a
distance of two different quantum mutual informations that is minimized
over local maps associated with local positive operator-valued
measurements \cite{D78}. A problem in a practical point of view is that
the original nonlocalizable information, quantum discord, and the
Piani {\em et al.}'s measure all require expensive computational tasks to
find minima over all possible local operations in their contexts.
A similar difficulty exists in the measurement-induced disturbance
\cite{G07,L08,D09} for which a minimization is actually required to find
a proper Schmidt basis (or local dephasing basis) used to compute the measure
in degenerate cases (namely, the cases where the eigenbases of one or
both of the reduced density matrices of the state are not unique).
Note that detecting nonclassical correlation can be easier than
quantification; apart from measures, there are easy cases for detection as
we later discuss in Section~\ref{sec4}. Our interest is, however, not only
detecting nonclassical correlation but quantification.

In our previous work \cite{SRN08}, an entropic measure $G$
based on a sort of a game to find the eigenvalues of a reduced density matrix
from the eigenvalues of an original density matrix was proposed.
This measure can be computed within a finite time although it does not
have a perfect detection range. Its computational cost is exponential in
the dimension of the Hilbert space. One way \cite{ourarxivEnCE} to
achieve a polynomial cost is to introduce carefully-chosen maps similar to 
positive-but-not-completely-positive maps \cite{P96,H96}. We pursue a
different way in this paper.

Here, we introduce a measure of nonclassical correlation for a bipartite
state using the eigenvalues of reduced matrices obtained by tracing out
a subspace after certain truncations of a density matrix. Its
construction is rather simple as we see in Definition~\ref{defM} of
Section~\ref{sec3}. The computational cost is shown to be polynomial in
the dimension of the Hilbert space. Although the measure is imperfect in
the detection range and possesses no additivity property, it is
practically useful as an economical measure invariant under local
unitary operations. It takes the maximum value for the generalized Bell
states. In addition, it reduces to the entropy of entanglement (see,
e.g., \cite{BZ06} for the definition) for pure states with the Schmidt
coefficients $\le 1/\sqrt{2}$. (Here, a Schmidt coefficient is a square
root of a non-zero eigenvalue of a reduced density matrix of a subsystem
for a given pure state.)

This paper is organized as follows. We begin with a brief overview of the
measure $G$ in Section~\ref{sec2}. The measure $M$ is introduced and its
properties are investigated in Section~\ref{sec3}.  The validity of $M$
as a measure and a physical interpretation of its definition are
discussed in Section~\ref{sec4}. Section~\ref{sec5} summarizes this work.

\section{Brief overview of the measure by partitioning eigenvalues}\label{sec2}
We first make a brief overview of the measure $G$, an existing measure
computable in finite time.
In the context of bipartite splitting, it is defined as the minimized
discrepancy between the set of the mimicked eigenvalues of a local
system (say, subsystem A),
$\{\tilde{e_i}\}_{i=1}^{d^{\rm A}}$, and the set of the genuine eigenvalues
of the local system, $\{e_i\}_{i=1}^{d^{\rm A}}$.
Here, $\tilde{e_i}$'s are calculated by
(i) partitioning the $d^{\rm A}\times d^{\rm B}$ eigenvalues of the
original bipartite state $\rho^{\rm AB}$ into $d^{\rm A}$ sets; and
(ii) calculating the sum of the $d^{\rm B}$ elements in each set.
The discrepancy in view from one side (from Alice's side in this
context) is defined as 
\[
  F^{\rm A}(\rho^{\rm AB})=
\underset{\mathrm{partitionings}}{\mathrm{min}}\left|
\sum_{i} (\tilde e_i\log_2 \tilde e_i - e_i\log_2 e_i)
\right|.
\]
Similarly $F^{\rm B}(\rho^{\rm AB})$ is defined.
The measure is defined as
\[
G(\rho^{\rm AB})=\mathrm{max}[F^{\rm A}(\rho^{\rm AB}),
F^{\rm B}(\rho^{\rm AB})].
\]

A drawback of the measure is that
the number of combinations of eigenvalues that should be tried in the
minimization is ${}_{d^{\rm A}d^{\rm B}}{\rm C}_{d^{\rm B}}
\times{}_{(d^{\rm A}-1)d^{\rm B}}{\rm C}_{d^{\rm B}}\times~\cdots~\\
\times{}_{[d^{\rm A}-(d^{\rm A}-1)]d^{\rm B}}{\rm C}_{d^{\rm B}}
=(d^{\rm A}d^{\rm B})!/(d^{\rm B}!)^{\rm d^{\rm A}}
\simeq 2^{d^{\rm A}d^{\rm B}\log_2 d^{\rm A}}$ when the
subsystem of concern is ${\rm A}$ [$(d^{\rm A}d^{\rm B})!/
(d^{\rm A}!)^{\rm d^{\rm B}}\simeq
2^{d^{\rm A}d^{\rm B}\log_2 d^{\rm B}}$ when it is ${\rm B}$].
Indeed, this complexity is better in practice than that for
minimization over all certain local operations required for
calculating zero-way quantum deficit, quantum discord, and Piani {\em
et al.}'s measure. The complexity of a minimization over all local
operations for a subsystem, say ${\rm A}$, is
$O[{\rm poly}(d^{\rm A},d^{\rm B})\times 2^{(d^A)^2\log_2 c}]$ with $c$ the number of
values tried for each parameter of a local operation. The complexity
for computing $G$ is smaller in the range $d^{\rm A},d^{\rm B}\lesssim c$.
Computing $G$ is, however, still very expensive.

\section{Measure based on partial traces of truncated density matrices}\label{sec3}
A measure that is computable within realistic time is desired for
practical use. We introduce in the following a measure
that achieves a realistic computational time, namely, polynomial time
in the dimension of the Hilbert space.
\subsection{Introduction of the measure}
Let us begin with a basic definition.
\begin{definition}\label{def1}
Let us write the eigenspace corresponding to the eigenvalue $\eta$ of
a bipartite density matrix $\rho^{\rm AB}$ as
${\rm span}\{|v_k^\eta\rangle \}_{k=1}^{d^{\eta}}$
where $d^{\eta}$ is the dimension of the eigenspace and
$|v_k^\eta\rangle$'s are the eigenvectors.
Let us define a ``truncated'' density matrix down to the $\eta$ eigenspace
as
\[
\tilde{\rho}^{\eta}=\eta\sum_{k=1}^{d^{\eta}}
|v_k^\eta\rangle\langle v_k^\eta|.
\]
\end{definition}
The following proposition holds for $\tilde{\rho}^{\eta}$.
\begin{proposition}\label{prop1}
Consider $\tilde{\rho}^{\eta}$ introduced above.
The eigenvalues of the reduced matrix
${\rm Tr}_{\rm B}\tilde{\rho}^{\eta}$ (${\rm Tr}_{\rm A}\tilde{\rho}^{\eta}$)
of the system ${\rm A}$ (${\rm B}$) are integer multiples of $\eta$
if $\rho^{\rm AB}$ has a product eigenbasis.
\end{proposition}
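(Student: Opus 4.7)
The plan is to observe that the truncated density matrix $\tilde\rho^{\eta}$ is essentially $\eta$ times the spectral projector onto the $\eta$-eigenspace, and then exploit the product structure of the eigenbasis to evaluate the partial trace basis-independently. Concretely, since $\{|v_k^{\eta}\rangle\}_{k=1}^{d^{\eta}}$ is an orthonormal basis of the $\eta$-eigenspace, the operator $P_{\eta}:=\sum_{k=1}^{d^{\eta}}|v_k^{\eta}\rangle\langle v_k^{\eta}|$ is the orthogonal projector onto that eigenspace and is independent of the choice of basis within it. Thus $\tilde\rho^{\eta}=\eta P_{\eta}$, and we only need to study $\mathrm{Tr}_{\rm B}P_{\eta}$.

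Next I would use the product-eigenbasis hypothesis. By assumption there exist orthonormal bases $\{|v_j^{\rm A}\rangle\}$ and $\{|v_k^{\rm B}\rangle\}$ and eigenvalues $e_{jk}$ such that $\rho^{\rm AB}=\sum_{j,k}e_{jk}|v_j^{\rm A}\rangle\langle v_j^{\rm A}|\otimes|v_k^{\rm B}\rangle\langle v_k^{\rm B}|$. Defining $S_{\eta}=\{(j,k):e_{jk}=\eta\}$, the family $\{|v_j^{\rm A}\rangle\otimes|v_k^{\rm B}\rangle:(j,k)\in S_{\eta}\}$ is an orthonormal product basis of the $\eta$-eigenspace, so
\[
P_{\eta}=\sum_{(j,k)\in S_{\eta}}|v_j^{\rm A}\rangle\langle v_j^{\rm A}|\otimes|v_k^{\rm B}\rangle\langle v_k^{\rm B}|.
\]
Taking the partial trace over B gives $\mathrm{Tr}_{\rm B}P_{\eta}=\sum_{j}n_j^{\eta}\,|v_j^{\rm A}\rangle\langle v_j^{\rm A}|$ with $n_j^{\eta}:=|\{k:(j,k)\in S_{\eta}\}|\in\mathbb{Z}_{\ge 0}$. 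Multiplying by $\eta$, the eigenvalues of $\mathrm{Tr}_{\rm B}\tilde\rho^{\eta}$ are exactly the integers $n_j^{\eta}$ times $\eta$, which is the claim. The argument for $\mathrm{Tr}_{\rm A}\tilde\rho^{\eta}$ is identical with the roles of the subsystems swapped.

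The only genuine subtlety, and the step I would be most careful about, is the basis-independence issue when the eigenvalue $\eta$ is degenerate: a priori the $|v_k^{\eta}\rangle$ in Definition~\ref{def1} need not be product vectors, yet we must conclude something about the spectrum of $\mathrm{Tr}_{\rm B}\tilde\rho^{\eta}$. This is handled cleanly by the remark that $\sum_k|v_k^{\eta}\rangle\langle v_k^{\eta}|=P_{\eta}$ depends only on the eigenspace and not on its chosen basis, which allows us to re-express $P_{\eta}$ in the product basis furnished by the hypothesis before computing the partial trace. Once that observation is in place, the remainder of the proof is a one-line computation.
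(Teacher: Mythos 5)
Your proof is correct and follows essentially the same route as the paper's: rewrite the truncated matrix in terms of the product eigenvectors belonging to the $\eta$-eigenspace and count multiplicities after the partial trace. The only difference is that you make explicit the basis-independence of the spectral projector $P_\eta$, a point the paper's proof passes over silently when it asserts that $\tilde\rho^\eta$ ``becomes'' a sum over product eigenvectors; this is a worthwhile clarification but not a different argument.
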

\begin{proof}
Suppose $\rho^{\rm AB}$ has a product eigenbasis
$\{|a_i\rangle\}_{i=1}^{d^{\rm A}}\times \{|b_j\rangle\}_{j=1}^{d^{\rm B}}$.
Then $\tilde{\rho}^\eta$ becomes
$\eta\sum_{k=1}^{d^{\eta}} |a_k^\eta \rangle\langle a_k^\eta|
\otimes |b_k^\eta\rangle\langle b_k^\eta|$ where $|a_k^\eta\rangle$'s
($|b_k^\eta\rangle$'s) are some $d^{\eta}$ vectors, with possible
multiplicities, found in
$\{|a_i\rangle\}_{i=1}^{d^{\rm A}}$ ($\{|b_j\rangle\}_{j=1}^{d^{\rm B}}$).
Since $|a_i\rangle$'s ($|b_j\rangle$'s) are orthogonal to each other,
it is now easy to find that the proposition holds.
$\Box$\end{proof}
\begin{remark}
The sum of the eigenvalues of ${\rm Tr}_{\rm B}\tilde{\rho}^{\eta}$
is equal to ${\rm Tr}\tilde{\rho}^{\eta}=\eta d^{\eta}$; similarly,
that of ${\rm Tr}_{\rm A}\tilde{\rho}^{\eta}$ is equal to $\eta d^{\eta}$.
This property is tacitly used in the calculations hereafter.
\end{remark}
Let us introduce useful functions.
\begin{definition}
For $x,y\ge 0$, let us introduce the function ${\rm nim}_y(x)$
which is the nearest integer multiple of $y$ for $x$.
For the exceptional case that $x\pm y/2$ are integer multiples
of $y$, let it take the value $x-y/2$. Thus, the strict definition is
\[
 {\rm nim}_y(x) =\left\{
\begin{array}{ll}
y\lfloor x/y \rfloor~~~& (x-y\lfloor x/y \rfloor \le y\lceil x/y\rceil-x, ~y\not = 0)\\
y\lceil  x/y \rceil~~~&  (x-y\lfloor x/y \rfloor > y\lceil x/y\rceil-x, ~y \not = 0)\\
0~~~& (y=0)
\end{array}\right..
\]
\end{definition}
\begin{definition}
Consider the two collections of $m$ nonempty sets,
\[
X = \left\{
\{x_i^1\}_{i=1}^{d_1},\{x_i^2\}_{i=1}^{d_2},\ldots,\{x_i^m\}_{i=1}^{d_m}
\right\}
\]
and
\[
Y = \left\{
\{y_i^1\}_{i=1}^{d_1},\{y_i^2\}_{i=1}^{d_2},\ldots,\{y_i^m\}_{i=1}^{d_m}
\right\}
\]
with nonnegative real numbers $x_i^j$ and $y_i^j$ ($j=1,\ldots,m$) such
that $\sum_{j=1}^m\sum_{i=1}^{d_j}x_i^j
=\sum_{j=1}^m\sum_{i=1}^{d_j}y_i^j= 1$ where
$d_j$ is the size of the $j$th set ($d_j$ is common for $X$ and $Y$ for
the same $j$). Here, $Y$ is assumed to be a prediction or an estimate of $X$.

Let us here introduce the quota 
\[
T_j=\sum_{i=1}^{d_j} x_i^j
\]
for the $j$th set. The quotas satisfy $\sum_{j=1}^m T_j = 1$.

As a discrepancy between $x_i^j$ and $y_i^j$,
we may use the quantity
\[
 s(x_i^j,y_i^j)=-\left|x_i^j - y_i^j\right|\log_2 (x_i^j/T_j).
\]
This can be interpreted as the claim that we obtain
$[(-\log_2 x_i^j) - (-\log_2 T_j)]$-bit of information for the $(i,j)$th
event with the weight $\left|x_i^j - y_i^j\right|$ of surprise when
we have the estimate value $y_i^j$ for $x_i^j$ and know beforehand that an 
event in the $j$th group occurs.
Because $x_i^j \le T_j$, it is guaranteed that $s(x_i^j,y_i^j) \ge 0$.

With the quantities, we define the function
\[
 \tilde{S}(X,Y)=\sum_{j=1}^m \sum_{i=1}^{d_j} s(x_i^j, y_i^j).
\]
This can be regarded as a (nonsymmetric) distance between $X$ and its
prediction $Y$.
\end{definition}
\begin{proposition}\label{propmaxS}
The relation
$\tilde{S}(X,Y)\le \log_2(\underset{j}{\rm max}~d_j)$ holds
if $y_i^j \le 2 x_i^j$ for $\forall i,j$.
\end{proposition}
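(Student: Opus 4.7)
The plan is to bound each summand of $\tilde{S}(X,Y)$ by a multiple of $-x_i^j\log_2(x_i^j/T_j)$, then recognize the resulting aggregate as a convex combination of Shannon entropies of the conditional distributions $\{x_i^j/T_j\}_{i=1}^{d_j}$, each of which is at most $\log_2 d_j$.

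First I would reduce the absolute value. Since by hypothesis $y_i^j\le 2x_i^j$, we have $|x_i^j-y_i^j|\le x_i^j$: indeed, if $y_i^j\ge x_i^j$ then $|x_i^j-y_i^j|=y_i^j-x_i^j\le 2x_i^j-x_i^j=x_i^j$, and if $y_i^j<x_i^j$ then trivially $|x_i^j-y_i^j|<x_i^j$. Combined with the observation that $x_i^j\le T_j$ forces the factor $-\log_2(x_i^j/T_j)$ to be nonnegative, this yields the termwise bound $s(x_i^j,y_i^j)\le -x_i^j\log_2(x_i^j/T_j)$.

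Next I would sum over $i$ and $j$ and factor out $T_j$ from the inner sum:
\[
\tilde{S}(X,Y)\;\le\;\sum_{j=1}^m T_j\sum_{i=1}^{d_j}\left(-\frac{x_i^j}{T_j}\log_2\frac{x_i^j}{T_j}\right).
\]
By the definition $T_j=\sum_i x_i^j$, the numbers $\{x_i^j/T_j\}_{i=1}^{d_j}$ form a probability distribution on $d_j$ outcomes (when $T_j>0$; terms with $T_j=0$ contribute nothing). The inner sum is therefore the Shannon entropy of this distribution and is bounded above by $\log_2 d_j\le \log_2(\max_j d_j)$. Using the normalization $\sum_j T_j=1$ then gives $\tilde{S}(X,Y)\le \log_2(\max_j d_j)$, as claimed.

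The argument is essentially routine; the only subtlety I would be careful about is the sign of $\log_2(x_i^j/T_j)$, which is why the step $x_i^j\le T_j$ must be noted explicitly before turning the termwise inequality around. Edge cases $T_j=0$ (where the whole $j$th block vanishes) and $x_i^j=0$ (where $0\log_2 0$ is read as $0$) cause no trouble.
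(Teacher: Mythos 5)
Your proof is correct and follows essentially the same route as the paper's: bound each term by $-x_i^j\log_2(x_i^j/T_j)$ using $|x_i^j-y_i^j|\le x_i^j$ (which the paper asserts without the case split you supply), then rewrite the sum as $\sum_j T_j$ times the Shannon entropy of $\{x_i^j/T_j\}_i$ and bound by $\log_2 d_j$. The only difference is that you spell out the justification of the termwise inequality and the edge cases, which the paper leaves implicit.
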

\begin{proof}
We have $s(x_i^j,y_i^j) \le -x_i^j\log_2 (x_i^j/T_j)$ if $y_i^j \le 2 x_i^j$.
Then,
\[\begin{split}
&\tilde{S}(X,Y) \le \sum_{j=1}^m \sum_{i=1}^{d_j} -x_i^j\log_2 (x_i^j/T_j)\\
&= \sum_{j=1}^m T_j \sum_{i=1}^{d_j} -(x_i^j/T_j)\log_2 (x_i^j/T_j)
\le \sum_{j=1}^m T_j \log_2(d_j)\\
&\le \log_2(\underset{j}{\rm max}~d_j).
\end{split}\]
In this transformation, we have used the fact that, for each $j$, one may
regard $(x_i^j/T_j)$ as probabilities $z_i^j$ satisfying
$\sum_{i=1}^{d_j} z_i^j = 1$.
$\Box$\end{proof}
The introduced function $\tilde{S}(X,Y)$ is employed to
quantify a distance between two collections.
One may be curious about the reason why we have introduced the quantity
$s(x_i^j,y_i^j)$ to define this function.
Indeed, it is more common to use the function\\
$r(x_i^j,y_i^j)=-x_i^j\log_2(x_i^j/y_i^j)$ instead of $s(x_i^j,y_i^j)$ to quantify
a discrepancy between $x_i^j$ and $y_i^j$. For example, the relative entropy
employs $r$. However, $r$ can only be used under the condition
$x_i^j \not = 0 \leftrightarrow y_i^j \not = 0$.
In the theory we are constructing in this contribution, this condition
does not hold.

Now we define the new measure of nonclassical correlation
on the basis of Proposition \ref{prop1}.
\begin{definition}\label{defM}
Suppose there are $m$ distinct eigenvalues,
$\eta_1,\ldots,\eta_m$, for a bipartite state $\rho^{\rm AB}$,
i.e., $\rho^{\rm AB} = \sum_{j=1}^m \tilde{\rho}^{\eta_j}$.
Let us write the dimension of the $\eta_j$ eigenspace as $d^{\eta_j}$.
For $\tilde{\rho}^{\eta_j}$, consider the eigenvalues
$\lambda_i^{j, {\rm A}}$ of
the reduced matrix ${\rm Tr}_{\rm B}\tilde{\rho}^{\eta_j}$.
Let $d_j^{\rm A}$ be the rank of
${\rm Tr}_{\rm B}\tilde{\rho}^{\eta_j}$.
Consider the collections
\[
X^{\rm A}(\rho^{\rm AB})=\{
\{\lambda_i^{1, {\rm A}}\}_{i=1}^{d_1^{\rm A}},
\{\lambda_i^{2, {\rm A}}\}_{i=1}^{d_2^{\rm A}},
\cdots,
\{\lambda_i^{m, {\rm A}}\}_{i=1}^{d_m^{\rm A}}\}
\]
and
\[\begin{split}
 Y^{\rm A}(\rho^{\rm AB})=\{&
\{{\rm nim}_{\eta_1}(\lambda_i^{1, {\rm A}})\}_{i=1}^{d_1^{\rm A}},
\{{\rm nim}_{\eta_2}(\lambda_i^{2, {\rm A}})\}_{i=1}^{d_2^{\rm A}},\\
&\cdots,
\{{\rm nim}_{\eta_m}(\lambda_i^{m, {\rm A}})\}_{i=1}^{d_m^{\rm A}}\}
\end{split}\]
as functions of $\rho^{\rm AB}$.
The measure of nonclassical correlation from the view of the subsystem
${\rm A}$ is defined as
\begin{equation}\label{eqMA}\begin{split}
M^{\rm A}(\rho^{\rm AB})
&=\tilde{S}[X^{\rm A}(\rho^{\rm AB}),Y^{\rm A}(\rho^{\rm AB})]\\
&=-\sum_{j=1}^m \sum_{i=1}^{d_j^{\rm A}}\left|
\lambda_i^{j, {\rm A}}-{\rm nim}_{\eta_j}(\lambda_i^{j, {\rm A}})\right|
\log_2 \frac{\lambda_i^{j, {\rm A}}}{\eta_j d^{\eta_j}}.
\end{split}\end{equation}
In the same way, $M^{\rm B}(\rho^{\rm AB})$ is defined:
\[
M^{\rm B}(\rho^{\rm AB})
=\tilde{S}[X^{\rm B}(\rho^{\rm AB}),Y^{\rm B}(\rho^{\rm AB})],
\]
where
\[
X^{\rm B}(\rho^{\rm AB})=\{
\{\lambda_i^{1, {\rm B}}\}_{i=1}^{d_1^{\rm B}},
\{\lambda_i^{2, {\rm B}}\}_{i=1}^{d_2^{\rm B}},
\cdots,
\{\lambda_i^{m, {\rm B}}\}_{i=1}^{d_m^{\rm B}}\}
\]
and
\[\begin{split}
 Y^{\rm B}(\rho^{\rm AB})=\{&
\{{\rm nim}_{\eta_1}(\lambda_i^{1, {\rm B}})\}_{i=1}^{d_1^{\rm B}},
\{{\rm nim}_{\eta_2}(\lambda_i^{2, {\rm B}})\}_{i=1}^{d_2^{\rm B}},\\
&\cdots,
\{{\rm nim}_{\eta_m}(\lambda_i^{m, {\rm B}})\}_{i=1}^{d_m^{\rm B}}\}
\end{split}\]
with $\lambda_i^{j, {\rm B}}$ the eigenvalues of
the reduced matrix ${\rm Tr}_{\rm A}\tilde{\rho}^{\eta_j}$ and
$d_j^{\rm B}$ the rank of
${\rm Tr}_{\rm A}\tilde{\rho}^{\eta_j}$.

The new measure of nonclassical correlation is defined by their average
\[
M(\rho^{\rm AB})=\frac{1}{2}
[M^{\rm A}(\rho^{\rm AB})+M^{\rm B}(\rho^{\rm AB})].
\]
\end{definition}
Let us list the basic properties of this measure.
\begin{proposition}\label{propvgen}
The measure $M$ vanishes for any (bipartite) state having a product eigenbasis.
\end{proposition}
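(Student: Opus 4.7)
The plan is to chain Proposition~\ref{prop1} directly into the definition of $M^{\rm A}$ (and by symmetry $M^{\rm B}$), showing that every summand in Eq.~(\ref{eqMA}) vanishes. Since $M$ is the average of the nonnegative quantities $M^{\rm A}$ and $M^{\rm B}$, establishing $M^{\rm A}=M^{\rm B}=0$ suffices.

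First I would assume $\rho^{\rm AB}$ has a product eigenbasis and fix any distinct eigenvalue $\eta_j$. Proposition~\ref{prop1}, applied to $\tilde{\rho}^{\eta_j}$, says that every eigenvalue $\lambda_i^{j,{\rm A}}$ of ${\rm Tr}_{\rm B}\tilde{\rho}^{\eta_j}$ is an integer multiple of $\eta_j$. By the definition of ${\rm nim}_{\eta_j}$, any integer multiple of $\eta_j$ is fixed by the map ${\rm nim}_{\eta_j}$, so the absolute-value prefactor $|\lambda_i^{j,{\rm A}}-{\rm nim}_{\eta_j}(\lambda_i^{j,{\rm A}})|$ in Eq.~(\ref{eqMA}) vanishes identically. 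Therefore each term of the double sum in Eq.~(\ref{eqMA}) is zero and $M^{\rm A}(\rho^{\rm AB})=0$.

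The only technicality to dispose of is the edge case $\eta_j=0$, for which $\tilde{\rho}^{\eta_j}=0$: then the rank $d_j^{\rm A}$ equals $0$, so the inner sum over $i$ is empty and the potentially ill-defined logarithm $\log_2[\lambda_i^{j,{\rm A}}/(\eta_j d^{\eta_j})]$ is never evaluated. Repeating the argument with A and B interchanged yields $M^{\rm B}(\rho^{\rm AB})=0$, hence $M(\rho^{\rm AB})=0$. There is essentially no obstacle in this proof: the substantive work was already carried out in Proposition~\ref{prop1}, and the measure $M$ was designed via the ${\rm nim}_{\eta_j}$ rounding precisely so that Proposition~\ref{prop1} forces immediate vanishing on product-eigenbasis states.
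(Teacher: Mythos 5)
Your proof is correct and follows essentially the same route as the paper's: invoke Proposition~\ref{prop1} to get that each $\lambda_i^{j,{\rm A(B)}}$ is an integer multiple of $\eta_j$, hence fixed by ${\rm nim}_{\eta_j}$, so every summand in Eq.~(\ref{eqMA}) vanishes. Your extra remark about the $\eta_j=0$ edge case is a harmless refinement the paper does not bother to spell out.
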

\begin{proof}
By Proposition \ref{prop1}, for any state having a product eigenbasis,
$\lambda_i^{j, {\rm A}} =
{\rm nim}_{\eta_j}(\lambda_i^{j, {\rm A}})$ and
$\lambda_i^{j, {\rm B}} =
{\rm nim}_{\eta_j}(\lambda_i^{j, {\rm B}})$
hold. This proves the proposition.
$\Box$\end{proof}
\begin{proposition}\label{propv2}
For a state $\rho^{AA'}$ of an $N\times N$-dimensional bipartite
system, $M(\rho^{\rm AA'})\le \log_2 N$.
\end{proposition}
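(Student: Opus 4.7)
The plan is to bound each one-sided quantity $M^{\rm A}(\rho^{\rm AA'})$ and $M^{\rm A'}(\rho^{\rm AA'})$ separately by $\log_2 N$ via Proposition~\ref{propmaxS}, and then take the average. Writing $x_i^j = \lambda_i^{j,{\rm A}}$ and $y_i^j = {\rm nim}_{\eta_j}(\lambda_i^{j,{\rm A}})$, Proposition~\ref{propmaxS} requires $y_i^j \le 2 x_i^j$ for every $i,j$. Once this is established, we immediately obtain $M^{\rm A}(\rho^{\rm AA'}) \le \log_2(\max_j d_j^{\rm A}) \le \log_2 N$, since each $d_j^{\rm A}$ is the rank of an operator acting on the $N$-dimensional space of subsystem A.

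The main verification is therefore the inequality $y_i^j \le 2 x_i^j$. Since $i$ ranges only up to the rank $d_j^{\rm A}$, every $x_i^j$ is strictly positive, so a short case split on the definition of ${\rm nim}_{\eta_j}$ suffices. If $0 < x_i^j \le \eta_j/2$ (including the tie endpoint, where the exceptional clause of ${\rm nim}_y$ returns $x-y/2 = 0$), then $y_i^j = 0$ and hence $|y_i^j - x_i^j| = x_i^j$. If instead $x_i^j > \eta_j/2$, then $y_i^j \ge \eta_j$ while by definition $|y_i^j - x_i^j| \le \eta_j/2 < x_i^j$. In both cases $|y_i^j - x_i^j| \le x_i^j$, which for nonnegative $x_i^j, y_i^j$ is precisely the condition $y_i^j \le 2 x_i^j$.

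Before invoking Proposition~\ref{propmaxS} one must also check the normalization and the quota identities it assumes, and these follow from Proposition~\ref{prop1} and the remark after it: $\sum_{j,i} \lambda_i^{j,{\rm A}} = \sum_j {\rm Tr}\,\tilde{\rho}^{\eta_j} = {\rm Tr}\,\rho^{\rm AA'} = 1$, and the $j$th quota is $T_j = \sum_i \lambda_i^{j,{\rm A}} = \eta_j d^{\eta_j}$, which matches the denominator appearing in Eq.~(\ref{eqMA}). The same reasoning with A and A' interchanged yields $M^{\rm A'}(\rho^{\rm AA'}) \le \log_2 N$, and averaging the two bounds produces $M(\rho^{\rm AA'}) \le \log_2 N$.

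I anticipate no substantive obstacle: the essential analytic work is packaged into Proposition~\ref{propmaxS}, and everything else reduces to the two-case analysis of ${\rm nim}_{\eta_j}$ above. The one spot where carelessness could bite is the tie endpoint $x_i^j = \eta_j/2$, where one needs the exceptional clause in the definition of ${\rm nim}_y$ to secure $|y_i^j - x_i^j| \le x_i^j$; the harder-looking issue of the presence of zero eigenvalues is avoided at the outset by letting $i$ run only up to the rank $d_j^{\rm A}$.
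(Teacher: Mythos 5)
Your proof is correct and follows essentially the same route as the paper's: the paper likewise notes that ${\rm nim}_y(x)\le 2x$ for all $x,y\ge 0$ and that $\max_j d_j^{\rm A(B)}\le N$, then invokes Proposition~\ref{propmaxS}. Your two-case verification of ${\rm nim}_{\eta_j}(\lambda)\le 2\lambda$ and the normalization check are just details the paper leaves implicit.
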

\begin{proof}
Under this condition, we have $\underset{j}{\rm max}~d_j^{\rm A(B)}\le N$.
In addition, generally ${\rm nim}_y(x)\le 2x$ for $0\le x,y$.
Thus by Proposition \ref{propmaxS}, this proposition holds.
$\Box$\end{proof}
There are some properties of the measure for the pure states.
\begin{proposition}\label{propvpure}
(i) The measure $M$ vanishes for a pure state if and only if the state is a
product state. Thus $M$ never vanishes for an entangled pure state.\\
(ii) For a pure state $|\phi\rangle\langle\phi|$, $M$ is bounded
above by the entropy of entanglement,
$S_{\rm vN}({\rm Tr_B}|\phi\rangle\langle\phi|)$, where
$S_{\rm vN}(\sigma)=-\sum_{k=1}^d\lambda_k\log_2\lambda_k$ is
the von Neumann entropy for a general $d$-dimensional density matrix
$\sigma$ with the eigenvalues $\lambda_k$.\\
(iii) The measure $M$ reduces to the entropy of entanglement
for a pure state with the Schmidt coefficients $\le 1/\sqrt{2}$.
\end{proposition}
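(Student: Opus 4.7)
The plan is to reduce all three statements to inspection of the formula in Eq.~(\ref{eqMA}) via the Schmidt decomposition of $|\phi\rangle$. Writing $|\phi\rangle=\sum_{k=1}^{r}\sqrt{\mu_k}|a_k\rangle|b_k\rangle$ with $\mu_k>0$ and $r$ the Schmidt rank, the density matrix $\rho^{\rm AB}=|\phi\rangle\langle\phi|$ has at most two distinct eigenvalues: $\eta_1=1$ with one-dimensional eigenspace spanned by $|\phi\rangle$, and (when $d^{\rm A}d^{\rm B}>1$) $\eta_2=0$. Since $\tilde\rho^{0}=0$ has rank-zero partial traces, one has $d_2^{\rm A}=d_2^{\rm B}=0$ and the corresponding inner index set in Definition~\ref{defM} is empty. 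All of $M^{\rm A}$ and $M^{\rm B}$ therefore come from $\tilde\rho^{\eta_1}=|\phi\rangle\langle\phi|$, whose partial traces have nonzero eigenvalues equal to the Schmidt weights $\mu_k$; since $\eta_1 d^{\eta_1}=1$, both reduce to
\[
M^{\rm A}(\rho^{\rm AB})=M^{\rm B}(\rho^{\rm AB})=-\sum_{k=1}^{r}\bigl|\mu_k-{\rm nim}_1(\mu_k)\bigr|\log_2\mu_k.
\]

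For part (i), a product pure state corresponds to $r=1$ with $\mu_1=1$, whence ${\rm nim}_1(1)=1$ and every summand vanishes, giving $M=0$. Conversely, if $r\ge 2$ then every Schmidt weight lies strictly in $(0,1)$, so no $\mu_k$ is an integer multiple of $1$ and hence $|\mu_k-{\rm nim}_1(\mu_k)|>0$; combined with $-\log_2\mu_k>0$, every summand is strictly positive and $M>0$.

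For parts (ii) and (iii) the key step is the pointwise bound $|\mu_k-{\rm nim}_1(\mu_k)|\le\mu_k$. A case split using the stated tie-breaking convention gives ${\rm nim}_1(\mu_k)=0$ whenever $\mu_k\le 1/2$ (so the left-hand side equals $\mu_k$) and ${\rm nim}_1(\mu_k)=1$ whenever $\mu_k>1/2$ (so it equals $1-\mu_k<\mu_k$). Multiplying by $-\log_2\mu_k\ge 0$ and summing gives $M^{\rm A}\le-\sum_k\mu_k\log_2\mu_k=S_{\rm vN}({\rm Tr}_{\rm B}|\phi\rangle\langle\phi|)$, and averaging with the identical $M^{\rm B}$ bound yields (ii). Under the extra hypothesis of (iii), every $\mu_k\le 1/2$ so the pointwise bound is saturated termwise, giving $M^{\rm A}=M^{\rm B}=S_{\rm vN}$ and hence $M=S_{\rm vN}$. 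The one subtlety to dispatch at the outset is the book-keeping that collapses the outer $\eta$-sum in Definition~\ref{defM} to the single $\eta_1=1$ term; once that is done, everything else is a routine case analysis for ${\rm nim}_1$.
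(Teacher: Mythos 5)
Your proposal is correct and follows essentially the same route as the paper: Schmidt decomposition, reduction to the single eigenvalue $\eta=1$ (discarding the zero eigenspace), and the pointwise bound $|x-{\rm nim}_1(x)|\le x$ with equality when $x\le 1/2$. You are merely more explicit than the paper about the empty contribution of the zero eigenspace and the strict positivity of each summand when the Schmidt rank exceeds one.
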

\begin{proof}
(i) Consider the Schmidt decomposition of a pure state vector
$|\phi\rangle$, namely, $|\phi\rangle=\sum_{k=1}^{R}
\sqrt{c_k}|a_k\rangle|b_k\rangle$
with $\sqrt{c_k}$'s the Schmidt coefficients, $R$ the Schmidt rank, and
$|a_k\rangle$'s and $|b_k\rangle$'s being the eigenvectors of the
reduced density matrix ${\rm Tr_B}|\phi\rangle\langle\phi|$ and those of 
${\rm Tr_A}|\phi\rangle\langle\phi|$, respectively,
corresponding to the nonzero eigenvalues $c_k$ ($c_k$'s are common for
the reduced density matrices). We have only to consider the value $1$
for $\eta$ in the calculation of $M$ for a pure state. Then,
obviously, $M$ vanishes if and only if the Schmidt rank is $1$.\\
(ii) Consider the real numbers $0\le x,y\le 1$. The relation
${\rm nim}_y(x) \le 2x$ holds because $0$ is closer or equally close to
 $x$ than $2x$. Therefore,
$-|x-{\rm nim}_y(x)|\log_2 x \le -x\log_2 x$. In addition, for a general
pure state $|\phi\rangle\langle\phi|$, we have only to consider the
value $1$ for $\eta$. Therefore,
$M(|\phi\rangle\langle\phi|)=M^{\rm A}(|\phi\rangle\langle\phi|)
=M^{\rm B}(|\phi\rangle\langle\phi|)
\le S_{\rm vN}({\rm Tr_B}|\phi\rangle\langle\phi|)$.\\
(iii) Recall again that, for a pure state, we have only to
consider the value $1$ for $\eta$.
In case we have a pure state vector $|\phi'\rangle$ whose Schmidt coefficients
are $\le 1/\sqrt{2}$, the definition obviously
reduces to
$M(|\phi'\rangle\langle\phi'|)=M^{\rm A}(|\phi'\rangle\langle\phi'|)
=M^{\rm B}(|\phi'\rangle\langle\phi'|)
=S_{\rm vN}({\rm Tr_B}|\phi'\rangle\langle\phi'|)$.
$\Box$\end{proof}
For the typical two-qubit entangled state $|\phi_p\rangle =
\sqrt{p}|00\rangle+\sqrt{1-p}|11\rangle$, $M(|\phi_p\rangle\langle\phi_p|)$
behaves as illustrated in Fig.\ \ref{figphip}.
\begin{figure}
\begin{center}
\scalebox{0.6}{\includegraphics{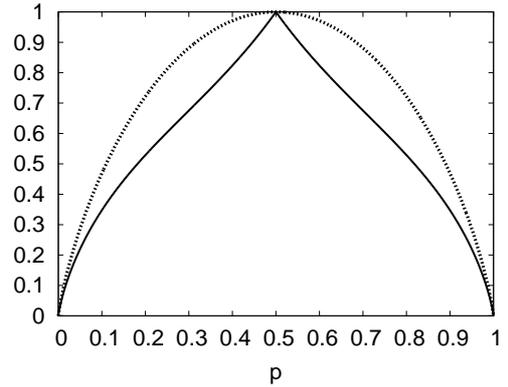}}
\caption{\label{figphip}Plots of $M(|\phi_p\rangle\langle\phi_p|)$
[the lower (solid) curve] and the entropy of entanglement $S_{\rm vN}({\rm Tr_B}
|\phi_p\rangle\langle\phi_p|)$ [the upper (dotted) curve] as functions
of $p$.}
\end{center}
\end{figure}
It should be noted that, for general states, $M$ is not an entanglement
measure but a measure of nonclassical correlation. Because of the
above properties, $M$ can be seen as an entanglement measure in case
the states are limited to pure states.

As important properties, for general states, the measure $M$ satisfies
local-unitary invariance and it takes its maximum value for the
generalized Bell states as we will see in Section~\ref{secinvar}.

The improvement in complexity is significant as we have already mentioned:
the measure is computed by using the eigenvalues of at
most $2d^{\rm A}d^{\rm B}$ reduced matrices. The total
complexity is dominated by the complexity of diagonalizing the
original density matrix, $O({d^{\rm A}}^3 {d^{\rm B}}^3)$,
which is larger than the complexity of tracing out a subsystem for
each truncated density matrix.
In the exceptional cases where $d^{\rm A}>{d^{\rm B}}^2$
or $d^{\rm B}>{d^{\rm A}}^2$, the complexity of diagonalizing all the
reduced matrices,\\
$O[{\rm max}({d^{\rm A}}^4 d^{\rm B}, d^{\rm A}{d^{\rm B}}^4)]$,
becomes the largest cost.
Thus the total complexity, namely, the number of the floating-point
operations in total to compute the measure, is
\[
O[{\rm max}({d^{\rm A}}^3 {d^{\rm B}}^3,~
  {d^{\rm A}}^4 d^{\rm B}, ~d^{\rm A}{d^{\rm B}}^4)].
\]

In the following examples, one may recognize the simpleness of
the process to compute $M$.
\paragraph{Examples}
Consider the two-qubit state
\begin{equation}\label{varsigma}
\varsigma=\frac{1}{2}(|00\rangle\langle00|+|1+\rangle\langle1+|)
\end{equation}
with $|+\rangle=(|0\rangle+|1\rangle)/\sqrt{2}$. This has no product
eigenbasis while it is separable.
The nonzero eigenvalue of $\varsigma$ is $1/2$ with the multiplicity two.
The eigenvalue of ${\rm Tr_B}\varsigma$ is $1/2$ with the multiplicity two;
this leads to $M^{\rm A}(\varsigma)=\tilde{S}(\{\{1/2,1/2\}\},\{\{1/2,1/2\}\}) = 0$.
The eigenvalues of ${\rm Tr_A}\varsigma$ are $(2\pm\sqrt{2})/4$
for which the nearest integer multiples of $1/2$ are $0$ and $1$;
this leads to $M^{\rm
B}(\varsigma)=\tilde{S}(\{\{(2-\sqrt{2})/4,(2+\sqrt{2})/4\}\},\{\{0,1\}\})\simeq 0.439$.
Therefore $M(\varsigma)=M^{\rm B}(\varsigma)/2 \simeq 0.220$.

In view of the problem of detecting nonclassical correlation, it is
easy for this example since one of the reduced density matrices is
nondegenerate (see the discussion in Section~\ref{sec4}). The next
example is not an easy case in this sense.

Consider the state in a $4\times 4$ dimensional system,
\[
 \zeta = \frac{1}{4}\left(
  |00 \rangle\langle 00|
+ |+2 \rangle\langle +2| + |2+ \rangle\langle 2+|
+ |33 \rangle\langle 33|
\right).
\]
Let $U_\zeta=U^{\rm A}_\zeta\otimes U^{\rm B}_\zeta$ be some local unitary
operation. Suppose that $\zeta'= U_\zeta \zeta U_\zeta^\dagger$ is
given and $M$ is computed for it.
The global and local eigenvalues are unchanged but the matrix form may
be complicated at a glance due to $U_\zeta$. We have the same eigenvalues
for ${\rm Tr_B}\zeta'$ and ${\rm Tr_A}\zeta'$. The eigenvalues of
${\rm Tr_B}\zeta'$ are $1/4$ with multiplicity two and
$(2\pm\sqrt{2})/8$. This leads to
$M(\zeta')=M^{\rm A}(\zeta')=M^{\rm B}(\zeta')=
\tilde{S}(\{\{(2-\sqrt{2})/8,(2+\sqrt{2})/8,1/4,1/4\}\},\\\{\{0,1/2,1/4,1/4\}\})
\simeq 0.366$.

The examples we have seen are fixed states.
In case there are parameters, it is rather complicated to write
$M$ in terms of them because it is influenced by the
degeneracy of eigenvalues and it involves the function ${\rm nim}_y(x)$.
As $M$ is tractable, one seldom has a motivation to decompose $M$ analytically.
One should keep $M$ as a routine in a computational program, which
might be a way to avoid the complication. This is in contrast to
intractable measures, for which finding an analytical solution for a
particular form of density matrices is highly motivated. As for quantum
discord, which is in general intractable to compute, analytical
solutions are known for particular forms of density matrices.
Luo \cite{L08-2} gave a general solution for the two-qubit density matrices
in the form $\kappa=(I\otimes I+\sum_{j=x,y,z}c_j\sigma_j\otimes\sigma_j)/4$,
where $\sigma_j$'s are Pauli matrices and $c_j$'s are real parameters. Ali {\em et al.} \cite{ARA10}
very recently gave a general solution for the two-qubit denstiy matrices
with only diagonal and anti-diagonal elements.

The density matrix $\kappa$ is special in the sense that its
eigenvectors are the Bell basis vectors. With this fact and its eigenvalues
$(1-c_x-c_y-c_z)/4$, $(1-c_x+c_y+c_z)/4$, $(1+c_x-c_y+c_z)/4$,
and $(1+c_x+c_y-c_z)/4$, it is straight-forward albeit complicated to
write $M(\kappa)$ in terms of the parameters $c_j$.

\subsection{Imperfectness in the detection range}
We have achieved a reduction in the complexity by introducing the measure $M$.
We have shown that several properties are satisfied by $M$ in Propositions
\ref{propvgen}, \ref{propv2}, and \ref{propvpure}.
As a desirable property for a measure of nonclassical correlation,
it never vanishes for entangled pure states.
However, our main concern is to use $M$ for general states which are
mostly mixed states. As a matter of fact, the measure does not have a
perfect detection range as is expected from the fact that it does not
test all the local bases unlike other expensive measures like quantum
discord (with both-side test) or zero-way quantum
deficit. For example, the measure $M$ vanishes for the two-qubit state
\begin{equation}\label{sigma}\sigma=
\frac{1}{6}\left(|00\rangle\langle00|+2|01\rangle\langle01|
+3|1+\rangle\langle1+|\right).
\end{equation}
Nevertheless, this state has no product eigenbasis because
$|0\rangle\langle 0|$ and $|+\rangle\langle +|$ cannot be
diagonalized in the same basis.

Consequently, what one can claim is that a state for which the measure does
not vanish is in the outside of the set B of the states for which the
measure $M$ vanishes, and hence in the outside of the set C
of the states having a product eigenbasis, as illustrated in
Fig.~\ref{classfig}.
Note that the set B includes some inseparable states while C does not.
For example, the measure $M$ vanishes for the state $\tau$, which is represented
as a density matrix acting on the $(3\times 3)$-dimensional Hilbert space:
\begin{equation}\label{tau}
 \tau=\frac{1}{3}(|\phi\rangle^{\rm AB}\langle\phi|
+|\psi\rangle^{\rm AB}\langle\psi|
+|\xi\rangle^{\rm AB}\langle\xi|)
\end{equation}
with
\begin{eqnarray*}
|\phi\rangle^{\rm AB}&=\frac{|0\rangle^{\rm A}|1\rangle^{\rm B}
+|1\rangle^{\rm A}|0\rangle^{\rm B}}{\sqrt{2}},\\
|\psi\rangle^{\rm AB}&=\frac{|1\rangle^{\rm A}|2\rangle^{\rm B}
+|2\rangle^{\rm A}|1\rangle^{\rm B}}{\sqrt{2}},\\
|\xi\rangle^{\rm AB}&=\frac{|2\rangle^{\rm A}|0\rangle^{\rm B}
+|0\rangle^{\rm A}|2\rangle^{\rm B}}{\sqrt{2}}.
\end{eqnarray*}
This is because the nonzero eigenvalue of $\tau$ is $1/3$ with
the multiplicity three and the eigenvalue of ${\rm Tr_B}\tau=
{\rm Tr_A}\tau$ is also $1/3$ with the multiplicity three.
The state $\tau$ is inseparable because its partial transposition
$(I\otimes T)\tau$  (here, $T$ is the transposition map) has the
eigenvalues $-1/6$, $1/6$, and $1/3$ with multiplicities
two, six, and one, respectively. Thus it has been found that $M$
cannot detect nonclassical
correlation of this negative-partial-transpose (NPT) state.
\begin{figure}[tbph]
\begin{center}
\scalebox{0.8}{\includegraphics{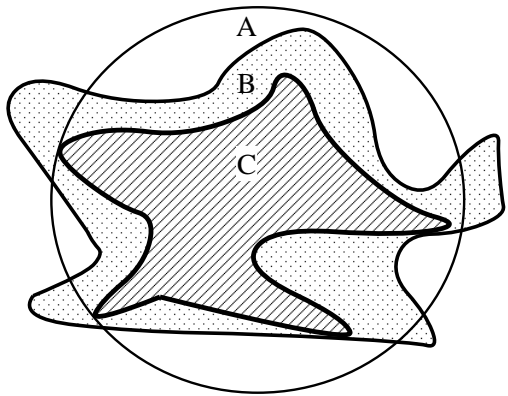}}
\caption{\label{classfig}
The hierarchy of the quantum states. A: The convex set of the separable
states. B: The nonconvex set of the states for which the measure $M$
vanishes. C: the nonconvex set of the states having a product
eigenbasis (${\rm C}\subset {\rm A}\cap {\rm B}$). Internal geometric
structures of each set are not depicted, which should wait for future
investigations.
}
\end{center}
\end{figure}
This example however does not weaken the measure $M$ very much in light
of the benefit of computational tractability. 
In addition, regarding $M$ as a detection tool, there is a way to
extend the detection range simply, which we will discuss later in
Section~\ref{sec4}.

\subsection{Relative detection ability}
One can compare the measures $M$ and $G$ in their detection ability using
the state $\sigma$ given in Eq.~(\ref{sigma}) and another particular state
for a couple of qubits.

The measure $M$ vanishes for $\sigma$ while $G$ does not vanish.
We have $G(\sigma)=H(1/3)-H[(6-\sqrt{10})/12]\simeq 0.129$ where
$H(x)=-x\log_2 x - (1-x)\log_2(1-x)$ is the binary entropy function.

On the other hand, $M$ does not vanish for the state
$\sigma'=|\phi\rangle\langle\phi|/2
+(|01\rangle\langle01|+|10\rangle\langle10|)/4$
with $|\phi\rangle=(|00\rangle+|11\rangle)/\sqrt{2}$, while $G$
vanishes. We have
$M(\sigma') = M^{\rm A}(\sigma')=M^{\rm B}(\sigma')=\\
\tilde{S}(\{\{1/4,1/4\},\{1/4,1/4\}\},\{\{0,0\},\{1/4,1/4\}\})=1/2$.

Therefore, $M$ is neither stronger nor weaker than $G$ in detecting
nonclassical correlations.

\subsection{Axioms satisfied by the measure}\label{secinvar}
As we have described in Section~\ref{sec1}, there are axioms that
are desired to be satisfied by a measure of nonclassical correlation.
Here, we examine the measure $M$ in this regard.

First, it is easy to verify that $M$ vanishes for any state having a
product eigenbasis as we have already shown in Proposition\ \ref{propvgen}.

Second, we show that $M$ is invariant under the local unitary operations
$\mathcal{U}^{\rm A}\otimes \mathcal{U}^{\rm B}:
\rho\mapsto (U^{\rm A}\otimes U^{\rm B})\rho
(U^{\rm A}\otimes U^{\rm B})^\dagger$.
This is easily verified from the fact that (i)
${\rm Tr_{B}}\tilde{\rho}^{\eta}$
and
${\rm Tr_{B}}(\mathcal{U}^{\rm A}\otimes \mathcal{U}^{\rm B})
(\tilde{\rho}^{\eta})$ have the same eigenvalues, and similarly,
(ii) ${\rm Tr_{A}}\tilde{\rho}^{\eta}$
and
${\rm Tr_{A}}(\mathcal{U}^{\rm A}\otimes \mathcal{U}^{\rm B})
(\tilde{\rho}^{\eta})$ have the same eigenvalues.

Third, it is desirable that a measure takes its maximum value for certain
maximally entangled states as we have mentioned in Section~\ref{sec1}.
The dimensions $d^{\rm A}$ and $d^{\rm B}$ of subsystems ${\rm A}$
and ${\rm B}$ are set to $N$ to consider the maximum value. 
Then, by Proposition~\ref{propv2}, $M(\rho^{\rm AB})\le \log_2 N$ holds.
It is now easy to show that $M$ takes its maximum value for the
generalized Bell state
$|\Psi_{\rm Bell}\rangle\langle\Psi_{\rm Bell}|$
with
\[
|\Psi_{\rm Bell}\rangle =
\frac{1}{\sqrt{N}}\sum_{i=1}^{N}|i\rangle^A |i\rangle^B.
\]
This is a straightforward calculation.

It has been shown that the basic axioms are satisfied. Let us now
mention that, in addition to these axioms, some mathematical properties,
such as convexity and additivity properties are often investigated
for a measure. For a measure in the present context, convexity
should not hold because the set of the classically correlated
states is a nonconvex set. The additivity properties are next
investigated.

\subsection{Investigation on additivity properties}
The measure $M$ has neither the subadditivity property nor
the superadditivity property as we prove below.
It is also shown not to be weakly additive.

Here, let us denote a splitting of a system considered for the measure $M$
by a subscript $\cdot|\cdot$.
\begin{proposition}
Neither the subadditivity
\[
M_{\rm AC|BD}(\rho^{\rm AB}\otimes \sigma^{\rm CD})
\le M_{\rm A|B}(\rho^{\rm AB})+ M_{\rm C|D}(\sigma^{\rm CD})
\]
nor the superadditivity
\[
M_{\rm AC|BD}(\rho^{\rm AB}\otimes \sigma^{\rm CD})
\ge M_{\rm A|B}(\rho^{\rm AB})+ M_{\rm C|D}(\sigma^{\rm CD})
\]
holds. In addition, the weak additivity
\[
 M_{\rm AA\cdots|BB\cdots}
({\rho^{{\rm A}{\rm B}}}^{\otimes m})
= m M_{\rm A|B}(\rho^{\rm AB})
\]
does not hold.
\end{proposition}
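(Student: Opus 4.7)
The plan is to give explicit counterexamples, both of the self-tensored form $\rho\otimes\rho$ with $m=2$, so that each construction simultaneously refutes weak additivity and one of the (sub/super)additivity claims. In each case the eigenvalues of the relevant partial traces will land exactly on the tie-breaking value of ${\rm nim}_y$, which is where the measure is most sensitive and where the mismatch between $M(\rho\otimes\rho)$ and $2M(\rho)$ is easiest to read off.

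For non-subadditivity, I would take $\rho=\sigma$ from Eq.~(\ref{sigma}), for which we already have $M(\sigma)=0$, and show $M_{\rm AC|BD}(\sigma\otimes\sigma)>0$. The eigenvalues of $\sigma$ are $1/6,1/3,1/2$ with nondegenerate eigenvectors $|00\rangle,|01\rangle,|1+\rangle$, so $\sigma\otimes\sigma$ has six distinct nonzero eigenvalues arising as pairwise products. Because each A-factor of every eigenvector of $\sigma$ is a computational basis vector, the ${\rm Tr}_{BD}$ of every truncation is diagonal with entries that are nonnegative integer multiples of the corresponding eigenvalue, so $M^A(\sigma\otimes\sigma)=0$. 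The $|+\rangle$-component on $B$, however, is not orthogonal to $|0\rangle$; for instance the $2$-dimensional eigenspace at $\eta=1/12=(1/6)(1/2)$ is spanned by $|01\rangle^{AC}|0+\rangle^{BD}$ and $|10\rangle^{AC}|+0\rangle^{BD}$, so the $AC$-partial trace of its truncation has nonzero eigenvalues $1/24$ and $1/8$, i.e.\ exactly the tie values $\eta/2$ and $3\eta/2$ at which ${\rm nim}_\eta$ produces a strictly positive contribution to $\tilde S(X^B,Y^B)$. An analogous contribution arises at $\eta=1/6=(1/3)(1/2)$, and a direct calculation produces $M^B(\sigma\otimes\sigma)=(4-\log_2 3)/8>0$; thus $M(\sigma\otimes\sigma)>0=2M(\sigma)$, breaking both subadditivity and weak additivity.

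For non-superadditivity, I would take $\rho$ to be the two-qubit Werner state $\rho_W(p)=p|\Psi_{\rm Bell}\rangle\langle\Psi_{\rm Bell}|+(1-p)I/4$ with $p\in(0,1)$. Its two distinct eigenvalues $\eta_1=(3p+1)/4$ (nondegenerate) and $\eta_2=(1-p)/4$ (multiplicity~$3$) have eigenspace partial traces proportional to $I/2$, so every local eigenvalue arising in the computation satisfies $\lambda/\eta\in\{1/2,3/2\}$, again the tie cases of ${\rm nim}_\eta$. A direct evaluation gives $M(\rho_W(p))=(p+1)/2$. For $\rho_W(p)\otimes\rho_W(p)$ the three distinct nonzero eigenvalues $\eta_1^2,\eta_1\eta_2,\eta_2^2$ likewise have partial traces proportional to $I_4$; the same style of computation gives $M(\rho_W(p)^{\otimes 2})=(p+1)^2/2$. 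Since $(p+1)^2/2<p+1=2M(\rho_W(p))$ for $p\in(0,1)$, both superadditivity and weak additivity are refuted.

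The main obstacle is essentially bookkeeping: enumerating all eigenspaces of the tensor-product states, identifying which local eigenvalues sit on the tie-breaking value of ${\rm nim}_y$, and applying the convention $x-y/2$ consistently. Once this is done, both inequalities reduce to finite sums of rational-argument logarithms and the verifications are routine.
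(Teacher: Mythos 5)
Your proposal is correct, and its logical skeleton (self-tensored counterexamples of the form $\rho\otimes\rho$, so that each one simultaneously kills weak additivity) is exactly the paper's. The subadditivity part is essentially identical to the paper's proof: you use the same state $\sigma^{\otimes 2}$ with $\sigma$ from Eq.~(\ref{sigma}), identify the same two offending eigenspaces at $\eta=1/12$ and $\eta=1/6$, and your closed form $M^{\rm BD}=(4-\log_2 3)/8$ agrees with the paper's numerical value $\simeq 0.302$, giving $M_{\rm AC|BD}(\sigma\otimes\sigma)\simeq 0.151>0$. The superadditivity part takes a genuinely different route: the paper uses the single rank-three state $\sigma''=\frac14|\phi\rangle\langle\phi|+\frac38(|01\rangle\langle01|+|10\rangle\langle10|)$ and computes $M(\xi')=1/8<1/2=2M(\sigma'')$, whereas you use the Werner family $\rho_W(p)$ and obtain $M(\rho_W(p))=(p+1)/2$ and $M(\rho_W(p)^{\otimes 2})=(p+1)^2/2<p+1$ for all $p\in(0,1)$ (I checked these: the local eigenvalue ratios are $1/2$ and $3/2$ for one copy, and $1/4$, $3/2$, $9/4$ for two copies, and the sums collapse to $\eta_1+\eta_2$ and $2(\eta_1+\eta_2)^2$ respectively). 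Your version buys a one-parameter family of counterexamples and clean closed forms, at the cost of having to track the $d^{\eta}=1,3$ and $d^{\eta}=1,6,9$ degeneracy structure; the paper's $\sigma''$ buys a shorter computation with dyadic rationals. One caveat worth stating explicitly if you write this up: both of your examples lean on the tie-breaking convention ${\rm nim}_y(x)=x-y/2$ when $x\pm y/2$ are integer multiples of $y$, so you should cite that clause of the definition when evaluating ${\rm nim}_\eta(\eta/2)=0$ and ${\rm nim}_\eta(3\eta/2)=\eta$, as the whole positivity of the subadditivity counterexample (and the value of the Werner computation) hinges on it.
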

\begin{proof}
(i) First we prove that subadditivity does not hold.
Consider the state $\xi=\sigma^{\rm AB}\otimes \sigma^{\rm CD}$
with the state $\sigma$ defined by Eq.~(\ref{sigma}).
We have already found $M_{\rm A|B}(\sigma^{\rm AB})=0$.
Now we calculate $M_{\rm AC|BD}(\xi)$. The state
$\xi$ has the eigenvalues $e=0,1/36,1/18,1/12,1/9,1/6$, and $1/4$.
Let us write the truncated density matrix down to the $e$-eigenspace
as ${\tilde \xi}^{e}$.
We have ${\tilde \xi}^{1/12}=
(|001+\rangle\langle001+|+|1+00\rangle\langle1+00|)/12$.
This leads to
${\rm Tr_{AC}}{\tilde \xi}^{1/12}=
(|0+\rangle\langle0+|+|+0\rangle\langle+0|)/12$ whose eigenvalues are
$1/24, 1/8$, and $0$ with multiplicity two.
Similarly, ${\tilde \xi}^{1/6}=
(|011+\rangle\langle011+|+|1+01\rangle\langle1+01|)/6$ and
${\rm Tr_{AC}}{\tilde \xi}^{1/6}$ has the eigenvalues $1/12,1/4$, and
$0$ with multiplicity two.
For other ${\tilde \xi}^{e}$'s, ${\rm Tr_{AC}}{\tilde \xi}^{e}$
has the eigenvalues equal to $e$.
Therefore, $M_{\rm AC|BD}^{\rm BD}(\xi)=\\
\tilde{S}(\{\{1/24,1/8\},\{1/12,1/4\}\},\{\{0, 1/12\},\{0,1/6\}\})
\simeq 0.302$.
In addition, it is easy to find $M_{\rm AC|BD}^{\rm AC}(\xi)=0$
because ${\rm Tr_{BD}}{\tilde \xi}^{e}$ has the eigenvalues
equal to integer multiples of $e$ for every $e$.
Consequently, $M_{\rm AC|BD}(\xi)=0.151$, which is larger than
 $M_{\rm A|B}(\sigma^{\rm AB})+M_{\rm C|D}(\sigma^{\rm CD})=0$.
This is a counterexample to subadditivity.\\
(ii) Second we prove that superadditivity does not hold.
Consider the state $\xi'={\sigma''}^{\rm AB}\otimes {\sigma''}^{\rm CD}$
with ${\sigma''}=(1/4)|\phi\rangle\langle\phi|
+(3/8)(|01\rangle\langle01|+|10\rangle\langle10|)$ where
$|\phi\rangle=(|00\rangle+|11\rangle)/\sqrt{2}$.
Let us first calculate $M_{\rm A|B}({\sigma''}^{\rm AB})$. We have
${\tilde {\sigma''}}^{1/4}=(1/4)|\phi\rangle\langle\phi|$
and
${\tilde {\sigma''}}^{3/8}=(3/8)(|01\rangle\langle01|+|10\rangle\langle10|)$.
The eigenvalues of ${\rm Tr_B}{\tilde {\sigma''}}^{1/4}$
are $1/8$ with multiplicity two and those of
${\rm Tr_B}{\tilde {\sigma''}}^{3/8}$ are $3/8$ with multiplicity two.
Because of the symmetry of the state, this leads to
$M_{\rm A|B}({\sigma''}^{\rm AB})=M_{\rm A|B}^{\rm A}({\sigma''}^{\rm AB})
=M_{\rm A|B}^{\rm B}({\sigma''}^{\rm AB})=-2\times (1/8)\log_2[(1/8)/(1/4)]
=1/4$. Let us second calculate $M_{\rm AC|BD}(\xi')$.
We have ${\tilde {\xi'}}^{1/16}=(1/16)(|\phi\rangle\langle\phi|)^{\otimes 2}$,
${\tilde {\xi'}}^{3/32}=(3/32)[|\phi\rangle\langle\phi|\otimes
(|01\rangle\langle01|+|10\rangle\langle10|)+
(|01\rangle\langle01|+|10\rangle\langle10|)\otimes|\phi\rangle\langle\phi|]$,
and ${\tilde {\xi'}}^{9/64}=
(9/64)(|01\rangle\langle01|+|10\rangle\langle10|)^{\otimes 2}$.
The eigenvalues of ${\rm Tr_{BD}}{\tilde {\xi'}}^e$ are
$1/64$ for $e=1/16$, $3/16$ for $e=3/32$, and $9/64$ for $e=9/64$,
with multiplicity four for each $e$. Because of the symmetry of the
state, this leads to
$M_{\rm AC|BD}(\xi')=M_{\rm AC|BD}^{\rm AC}(\xi')
=M_{\rm AC|BD}^{\rm BD}(\xi')=-4\times (1/64)\log_2[(1/64)/(1/16)]
=1/8$, which is less than
$M_{\rm A|B}({\sigma''}^{\rm AB})+M_{\rm C|D}({\sigma''}^{\rm CD})=1/2$.
This is a counterexample to superadditivity.\\
(iii) The above counterexamples shown in (i) and (ii) are
also counterexamples to weak additivity.
$\Box$\end{proof}

\section{Discussions}\label{sec4}
The main aim of introducing the measure $M$ has been the computational
tractability. The problem of quantification is harder than detection as
certain axioms should be satisfied as we discuss later in this section.
In view of a detection problem rather than quantification, indeed, it is
known to be possible to decide whether a given density matrix has a
product eigenbasis within polynomial time for some special cases:
(i) In case there are only nondegenerate eigenvalues for a given
density matrix, it has a product eigenbasis if and only if the
Schmidt ranks of the eigenvectors $|v\rangle^{\rm AB}$ are all one, i.e.,
$|v\rangle^{\rm AB}=|a\rangle^{\rm A}|b\rangle^{\rm B}$ and, for each
subsystem, the Schmidt vectors ($|a\rangle$'s for A and $|b\rangle$'s
for B) are either orthogonal or equal to each other, neglecting the global
phase difference.
(ii) In case the reduced density matrices of a given density matrix have no
degeneracy, the local eigenbases are uniquely determined neglecting the global
phase factors. Then, the density matrix has a product eigenbasis if and only
if the product of the local eigenbases is the eigenbasis of the total system.
(iii) The case only one of the reduced density matrices, say, one for
the subsystem ${\rm B}$, is nondegenerate is also an easy case. In this
case, for the reduced density matrix, we have the eigenvectors
$|v_j\rangle^{\rm B}$ that are unique neglecting their global phase factors.
Then, first we test if the density matrix $\rho^{\rm AB}$ is equal
to $\sum_j {}^{\rm B}\langle v_j|\rho^{\rm AB}|v_j\rangle^{\rm B}
\otimes |v_j\rangle^{\rm B}\langle v_j|$. If this is false, then
$\rho^{\rm AB}$ has no product eigenbasis. If true, then
$\rho^{\rm AB}$ has a product eigenbasis if and only if
${}^{\rm B}\langle v_j|\rho^{\rm AB}|v_j\rangle^{\rm B}$'s are commutative
to each other.
(iv) There is a property very recently mentioned \cite{F10}:
the commutation relations $[\rho^{\rm AB},{\rm Tr_{B}}\rho^{\rm
AB}\otimes I^{\rm B}] = [\rho^{\rm AB},
I^{\rm A}\otimes{\rm Tr_{A}}\rho^{\rm AB}]= 0$ hold
if $\rho^{\rm AB}$ has a product eigenbasis. Any state that does not
satisfy these relations has no product eigenbasis while the converse does
not hold in general. For example, this detection method does not work
for the generalized Bell states.

Regarding measures as detection tools, combinations of imperfect
measures, that are neither stronger nor weaker to each other, makes
a tool with a larger detection range. One may utilize the measures
$M$, $G$, and some entanglement measure to produce a detection tool
easily, which is nonvanishing for entangled states detectable by the
entanglement measure.

Some existing measures, namely, the measurement-induced disturbance
\cite{G07,L08} and its variants, are computable within polynomial
time in case the dephasing basis is uniquely determined, namely in the
case (ii), while they are not otherwise. The measure $M$ is, in contrast,
always computable within polynomial time.

As is expected for a tractable measure, $M$ is imperfect in its
detection range. The measure has been found to vanish not only for the
state (\ref{sigma}) but for the NPT state (\ref{tau}). It has been
controversial as to which extent computational tractability has the
higher priority, facing the trade-off with the detection range. As for
entanglement measures, negativity and the logarithmic
negativity \cite{neg:i} are commonly used although they are not perfect
in the detection range. The basic axiom of monotonicity has been a
ground of argument for entanglement measures. Thus there must be a
demand of measures of nonclassical correlation whose detection ranges
may be imperfect as long as certain axioms are satisfied. As we have
described in Section \ref{sec1}, there are possible axioms for nonclassical
correlation measures:
(i) a measure should vanish for any classically correlated state,
(ii) a measure should be invariant under local unitary transformations,
and (iii) a measure should have the maximum value for certain maximally
entangled states. These are all satisfied by $M$ as described in
Section \ref{secinvar}.

For a more quantitative discussion, one may prefer to have the volume of
the detection range not very smaller than the volume of the set of the
nonclassically correlated states. This has not been studied and it is
an open problem in the present stage. We need to start from
understanding the geometric structure of the sets of the
classically/nonclassically correlated states. Quite recently, a very
simplified discussion on the geometric structure is reported in
Ref.\ \cite{F10}.

As an additional topic, there will be a particular behavior when an
evolution process is evaluated by $M$. The value of $M$ has an abrupt
change when multiple eigenvalues of the density matrix gradually change
and cross with each other. At the crossing point, the value of $M$ most
probably has discontinuity. This is because its component values are
computed inside each eigenspace. Thus $M$ has an abrupt change when
multiple eigenspaces are admixed in a process. This behavior corresponds
to a physical event of eigenvalue crossing, but can be seen as an
unstable behavior. It is controversial if this is counted as a drawback
as a measure.

Let us turn into a rather conceptual problem. It is often of general
interest to find a physical interpretation of a measure to justify
its quantification. Here we suggest an interpretation of the definition
of the measure $M$. It involves a transmission of a system from
a dealer to players and guess works by the players on local information.

Consider a thermal state $\rho_{\rm th} = e^{-\beta H}$ with a system
Hamiltonian $H$ and the reciprocal temperature $\beta$. It is possible
to have a transmission line (namely, a waveguide or a resonator) in
resonance with an eigenfrequency $\nu$ of $H$ under the same
temperature. An eigenvalue $\eta_\nu$ of $\rho_{\rm th}$ can be written
as $\eta_\nu = e^{-\beta \nu}$. This transmission line acts as a channel
that projects a density matrix of the system onto an eigenspace of
$H$ corresponding to the eigenfrequency $\nu$. Let us mention that one
need not to have a transmission line directly acting on the system in
case this is difficult for some physical setup. Consider the case where
the system is a molecular spin system placed in a magnetic resonance
spectrometer, for example. One may attach band-pass filters to its probe
circuits so that only the frequency band around the signal frequency
corresponding to $\nu$ is captured. This virtually behaves similarly as the
channel.

Suppose a dealer has a bipartite system ${\rm AB}$ in the thermal state
$\rho^{\rm AB}=\rho_{\rm th}$ and sends it through the channel. The
surviving state in the channel is $\tilde{\rho}^{\eta_\nu}$ (in the
notation used in Definition~\ref{def1}). At the other side of the channel,
Alice takes the subsystem ${\rm A}$ and Bob takes the subsystem ${\rm B}$.
These players try to guess their local eigenvalues of the reduced
matrices ${\rm Tr_B}\tilde{\rho}^{\eta_\nu}$ and
${\rm Tr_A}\tilde{\rho}^{\eta_\nu}$,  respectively.
A dealer may answer to the query as to whether or not a value is equal
to a local eigenvalue of the subsystem indicated in the query. Unlike the
scenario for the measure $G$ described in
Section~\ref{sec2}, suppose players want to use a linear-time strategy.
A natural strategy is to guess a local eigenvalue as an integer multiple
of $\eta_\nu$. They can easily find the value of
$\eta_\nu = e^{-\beta \nu}$ from the resonance frequency $\nu$ and the
temperature of the transmission line. Because a local eigenvalue does not
exceed $\eta_\nu d^{\eta_\nu}$, the maximum number of queries they
can try is $1+d^{\eta_\nu}$ for one eigenvalue. The quantity
$M^{\rm A (B)}(\rho^{\rm AB})$ consists of the components
\[
 -\sum_{i=1}^{d_\nu^{\rm A(B)}}\left|
\lambda_i^{\nu, {\rm A(B)}}-{\rm nim}_{\eta_\nu}(\lambda_i^{\nu, {\rm A(B)}})
\right| \log_2 \frac{\lambda_i^{\nu, {\rm A(B)}}}{\eta_\nu d^{\eta_\nu}}
\]
found in Eq.\ (\ref{eqMA}) (we read $j$ therein as $\nu$ here).
Each component is a discrepancy between the set of the true
local eigenvalues and that of their nearest guessed values for Alice
(Bob) in this strategy for $\nu$ with the reduction factor
$\log_2(\eta_\nu d^{\eta_\nu}) < 0$
(recall that ${\rm Tr}\tilde{\rho}^{\eta_\nu} =\eta_\nu d^{\eta_\nu}$)
corresponding to players' knowledge on which channel is used.

In this way, the definition of $M$ has been interpreted in view of a
physical process. The interpretation is, however, not applicable for general
states other than the thermal state. Although the system Hamiltonian is
$H$, the quantum state $\rho^{\rm AB}$ can be changed from the thermal
state by a unitary operation. Then the correspondence between the
spectrum of $\rho^{\rm AB}$ and that of $H$ is broken.
It is to be hoped that a different protocol-based interpretation will
be found for $M$ for general states.
For the time being, the validity of $M$ as a measure for general states 
relies on those axioms that we have discussed.

\section{Summary}\label{sec5}
We have proposed an unconventional measure of nonclassical correlation
by using truncations of a density matrix, on the basis of
Proposition \ref{prop1}. The mathematical properties of the measure
have been investigated.
It is invariant under local unitary operations and it takes
the maximum value for the generalized Bell states while it is imperfect in
the detection range and it has no additivity property.
It is usable for a practical evaluation of quantum states
because it is calculated within polynomial time in the dimension of a
density matrix.

\begin{acknowledgements}
~~~~The authors are thankful to Karol\\ {\.{Z}}yczkowski for helpful comments
on a geometric structure related to Fig.~\ref{classfig}.
They are also thankful to Todd Brun for a comment on the
definition of the measure $M$.
A.S. and M.N. are supported by the ``Open Research Center'' Project
for Private Universities: matching fund subsidy from MEXT.
R.R. is supported by the FIRST program of JSPS.
A.S. is supported by the Grant-in-Aid for Scientific
Research from JSPS (Grant No. 21800077).
R.R. and M.N. have been supported by the Grants-in-Aid for Scientific
Research from JSPS (Grant Nos. 1907329 and 19540422, respectively).
\end{acknowledgements}
\bibliographystyle{spmpsci-mod}
\bibliography{refs_nonclassical}

\end{document}